\newtheorem{theorem}{Theorem}[section]
\newtheorem{lemma}[theorem]{Lemma}
\newtheorem{problem}[theorem]{Problem}
\newtheorem{definition}[theorem]{Definition}
\newtheorem{remark}[theorem]{Remark}
\newcommand{\R}{{\mathbb{R}}}
\newcommand{\N}{{\mathbb{N}}}
\newcommand{\G}{{\mathcal{G}}}
\newcommand{\V}{{\mathcal{V}}}
\newcommand{\E}{{\mathcal{E}}}
\newcommand{\e}{\mathsf{e}}
\newcommand{\ie}{{\it i.e.}}
\DeclareMathOperator{\Tr}{Tr}
\DeclareMathOperator{\diff}{d}
\newcommand{\ra}{\rightarrow}
\newcommand{\ul}{\underline}
\newcommand{\ol}{\overline}
\newcommand{\EE}{\mathds{E}}
\title{\LARGE \bf
Distributed Consensus of Stochastic Multi-agent Systems with Prescribed Performance Constraints
}
\author{Pushpak Jagtap$^1$ and Dimos V. Dimarogonas$^2$ % <-this % stops a space
\thanks{This work was partially supported by the Wallenberg AI, Autonomous Systems and Software Program (WASP) funded by the Knut and Alice Wallenberg Foundation, the Swedish Foundation for Strategic Research (SSF), the Swedish Research Council (VR), the ERC CoG LEAFHOUND project, and Robert Bosch Center for Cyber-Physical Systems.} % <-this % stops a space
\thanks{$^1$Pushpak Jagtap is with Robert Bosch Center for Cyber-Physical Systems, Indian Institute of Science, Bangalore 560012, India {\tt\small pushpak@iisc.ac.in}}
\thanks{$^2$ Dimos V. Dimarogonas is with the Division of Decision and Control Systems, KTH Royal Institute of technology, SE-100 44 Stockholm, Sweden. {\tt\small dimos@kth.se}}%
}
\begin{document}

\maketitle
\thispagestyle{empty}
\pagestyle{empty}

%%%%%%%%%%%%%%%%%%%%%%%%%%%%%%%%%%%%%%%%%%%%%%%%%%%%%%%%%%%%%%%%%%%%%%%%%%%%%%%%
\begin{abstract}

This paper focuses on the problem of distributed consensus control of multi-agent systems while considering two main practical concerns $(i)$ stochastic noise in the agent dynamics and $(ii)$ predefined performance constraints over evolutions of multi-agent systems. In particular, we consider that each agent is driven by a stochastic differential equation with state-dependent noise which makes the considered problem more challenging compare to non-stochastic agents. The work provides sufficient conditions under which the proposed time-varying distributed control laws ensure consensus in expectation and almost sure consensus of stochastic multi-agent systems while satisfying prescribed performance constraints over evolutions of the systems in the sense of the $q$th moment. Finally, we demonstrate the effectiveness of the proposed results with a numerical example. 
  
\end{abstract}

%%%%%%%%%%%%%%%%%%%%%%%%%%%%%%%%%%%%%%%%%%%%%%%%%%%%%%%%%%%%%%%%%%%%%%%%%%%%%%%%
\section{Introduction}
The past decade has witnessed an ever-growing interest in the
study of multi-agent systems (MAS) due to their extensive applications in both science and engineering, see \cite{dorri2018multi,darmanin2017review}, and references therein for examples. Among many interesting problems, the synchronization or consensus problem of multi-agent system is an active research topic in the past years \cite{mesbahi2010graph, olfati2004consensus}, whose objective is to design a distributed consensus algorithm (or protocol) using only limited neighborhood information to ensure that all the agents achieve some common control objective, such as convergence to a common state.

In practice, stochastic disturbances, such as thermal noise,
channel fading, quantization effect during encoding and decoding, are inevitable and cannot be avoided in real-world systems. Therefore, the consensus problems of stochastic multi-agent systems (SMAS) have attracted much attention. Since the traditional consensus definition is not applicable in a stochastic setting, several researchers have proposed several consensus conceptions in different
probabilistic senses such as mean-square consensus, consensus in $q$th moment, consensus in probability, and
almost sure consensus.  Examples of few results include the mean-consensus protocol for SMAS \cite{wang2014decentralised}, consensus in probability for discrete-time SMAS \cite{ding2015event}, mean-square consensus control for time-varying SMAS \cite{ma2016event}, exponential consensus of SMAS with delay \cite{tang2015leader}, exponential leader-follower consensus \cite{sakthivel2018leader}, approximate consensus of discrete-time SMAS \cite{amelina2014approximate}, and average consensus of SMAS in $q$th moment \cite{liu2013distributed}. 
%Besides, many other interesting results on stochastic consensus problems 
%have been reviewed in \cite{ma2017consensus}. 

It is worth mentioning that the quality of performance, such as maximum overshoot, rate of convergence, and steady-state error, is usually required to be satisfied in practical systems. By considering such performance constraints, authors in \cite{bechlioulis2008robust} proposed \emph{prescribed performance control} (PPC) to ensure stability while respecting those constraints. Very recently many researchers adapted the PPC approach to guarantee prescribed performance constraints while solving various problems of multi-agent systems with deterministic agents \cite{macellari2016multi,chen2019consensus,stamouli2020multi,mehdifar2020prescribed}. On the other hand, there are a very few works available on utilizing PPC for stochastic control systems \cite{shao2018adaptive,sui2020novel}. However, as far as we know, there is no work available in the literature on the consensus of stochastic multi-agent systems while considering prescribed performance constraints.   

To the best of our knowledge, this paper is the first to address the consensus control of stochastic multi-agent systems with prescribed performance constraints. In this paper, we consider that the evolution of each agent is given by a stochastic differential equation with state-dependent noise and is driven by a conventional first-order consensus protocol \cite{olfati2004consensus} with an external control input. Further, for a given communication graph topology and predefined performance constraints, the paper proposes a time-varying distributed control law that guarantees consensus in considered SMAS along with the sufficient conditions over system and prescribed performance function parameters. In particular, we proposed results considering two well-know stochastic consensus notions: (i) consensus in expectation and (ii) almost sure consensus.  
%provides sufficient conditions under those the proposed time-varying distributed prescribed performance control law guarantee consensus of considered SMAS while satisfying predefined performance constraints over evolution. 
%sMoreover, the results provided in the paper are in the sense of $q$th moment.      

The remainder of this paper is structured as follows. In Section II, we  introduce stochastic multi-agent systems and prescribed performance constraints. Then, we formally define the problem considered in
this paper. Section III provides the sufficient conditions under which the proposed distributed control law ensures stochastic consensus of SMAS while guaranteeing prescribed performance constraints. Section IV demonstrates the effectiveness of the results using a numerical example. Finally, Section V concludes the paper. 

\section{Preliminaries and Problem Statement}
\subsection{Notations}\label{II1}
Let the triplet $(\Omega, \mathcal{F}, \mathds{P})$ denote a probability space with a sample space $ \Omega $, a filtration $ \mathcal{F} $, and the probability measure $ \mathds{P} $. The filtration $\mathds{F}= (\mathcal{F}_s)_{s\geq 0}$ satisfies the usual conditions of right continuity and completeness \cite{oksendal2013stochastic}. Let $(W_s)_{s\geq 0}$ be a $\mathds{F}$-Brownian motion. We use $\EE[\cdot]$ to denote the expectation operator. The symbols $ \N $, $ \N_0 $, $ \R$, $\R^+,$ and $\R_0^+ $ denote the set of natural, nonnegative integer, real, positive, and nonnegative real numbers, respectively. We use $ \R^{n\times m} $ to denote a vector space of real matrices with $ n $ rows and $ m $ columns. We use $\|\cdot\|$ to represent Euclidean norm. For $a\in\R$, we denote absolute value of $a$ by $|a|$. For $a,b\in\R$ and $a< b$, we use $(a,b)$ to represent an open interval in $\R$. For $a,b\in\N$ and $a\leq b$, we use $[a;b]$ to denote a close interval in $\N$. We use $I_n$ and $0_{n}$ to denote identity matrix and zero matrix in $\R^{n\times n}$, respectively. A diagonal matrix in $\R^{n\times n}$ with diagonal entries $d_1,\ldots, d_n$ is denoted by $diag\{d_1,\ldots, d_n\}$. Given a matrix $M\in\R^{n\times m}$, $M^T$ represents transpose of matrix $M$. Given a matrix $P\in\R^{n\times n}$, $\Tr(P)$ represents the trace of matrix $P$, and $P>0$ and $P\geq0$ denote positive definite and semi-definite matrices, respectively. Given a set $A$, we use $|A|$ to represent the cardinality of the set $A$. We use notations $\mathcal{K}$ and $\mathcal{K}_\infty$
to denote different classes of comparison functions, as follows:
$\mathcal{K}=\{\psi:\mathbb{R}_{0}^+ \rightarrow \mathbb{R}_{0}^+ |$ $ \psi$ is continuous, strictly increasing, and $\psi(0)=0\}$; $\mathcal{K}_\infty=\{\psi \in \mathcal{K} |$ $ \lim\limits_{r \rightarrow \infty} \psi(r)=\infty\}$.
\subsection{Graph Theory}
An undirected graph \cite{mesbahi2010graph} is defined as $\G=(\V,\E)$ with the vertices set $\V=\{1,2,\ldots,n\}$ and the edges set $\E=\{(i,j)\in\V\times\V\mid j\in\mathcal{N}_i\}$, where $\mathcal{N}_i$ denotes the set of neighbouring agents of agent $i$ that can communicate with agent $i$. Let us index edges in set $\E$ as $e_1,e_2,\ldots,e_m$, where $m=|\E|$ is the number of edges in the graph. A \emph{path} is a sequence of edges that connects two different vertices. A graph is \emph{connected} if and only if there exist a path between any pair of vertices. A graph is a \emph{tree} if and only if there exist exactly one path between any pair of vertices. By assigning an arbitrary orientation to each edge of $\G$, we define the \emph{incidence matrix}, $D\in\R^{n\times m}$, with the rows of $D$ being indexed by the vertices and columns being indexed by edges; and the element $d_{ij}=1$ if the vertex $i$ is the head of the edge $(i,j)$, $d_{ij}=-1$  if the vertex $i$ is the tail of the edge $(i,j)$, and $d_{ij}=0$ otherwise. The \emph{graph Laplacian} of $\G$ is described as $L=DD^T$. In addition, $L_e=D^TD$ is the so-called \emph{edge Laplacian}.      
\subsection{System Description}
In this work, we consider a multi-agent system with $n$ first-order stochastic agents modeled by following stochastic differential equation:
\begin{align}\label{agent_dynamic}
% \diff x_i=u_i \diff t+g_i(x_i)\diff {W}_t, \quad i\in[1;n],
\diff x_i=u_i \diff t+g(x_i)\diff {W}_t, \quad i\in[1;n],
\end{align}
where $x_i\in\R$, $u_i\in\R$ are the position and control input of the $i$th agent, respectively and $g:\R\ra\R$ is a Lipschitz continuous diffusion function with Lipschitz constant $\mathsf{k}_g\in\R_0^+$ such that: $\|g(x)-g(x')\|\leq \mathsf{k}_g\|x-x'\|$ for all $x,x'\in\R$. %and with $g(0)=0$. 
Note that for the sake of simplicity, here we consider one-dimensional agents and the result can be extended to higher dimensions with the appropriate use of Kronecker product.

Let us consider that agents represent the vertices $\V=\{1,2,\ldots,n\}$ of an undirected graph $\G$. We assume that the communication graph is static, i.e., neighbouring agents $\mathcal{N}_i$ of agent $i\in[1;n]$ do not vary over time and  
%Suppose that the first $n_f$ agents are followers and the last $n_l$ agents are leaders with the corresponding vertices sets $\V_F=\{1,2,\ldots,n_f\}$ and $\V_F=\{n_f+1,n_f+2,\ldots,n_f+n_l\}$ and $n_f+n_l=n$. Note that we use subscripts $F$ and $L$ to represent the terms related to followers and leaders through out the paper.
each agent is driven by a first-order consensus protocol with an external input $v_i\in\R$, \ie, $u_i=-\sum_{j\in\mathcal{N}_i}(x_i-x_j)+v_i$, with corresponding stochastic differential equation given as:
\begin{align}\label{follwer_SDE}
\diff x_i=\Big(-\sum_{j\in\mathcal{N}_i}(x_i-x_j)+v_i\Big) \diff t+g (x_i) \diff {W}_t, \quad i\in\V.
\end{align}

Let $x=[x_1,\ldots,x_n]^T\in \R^n$ and $v=[v_{1},\ldots,v_{n}]^T\in\R^{n}$ be stack vectors of absolute positions and external inputs of all agents, respectively. Denote $\ol{x}=[\ol x_1,\ldots\ol x_m]^T\in\R^m$ the stack vector of relative positions between the pair of communicating agents $(i,j)=e_k\in\E$, where $\ol x_k\triangleq x_{ij}= x_i-x_j$ such that $(i,j)=e_k\in\E, k\in[1;m]$. We also mention some interesting properties which are useful in the paper: $Lx=D\ol x$, $\ol x=D^T x$, and if $\ol x=0$, we have $Lx=0$. By stacking agents in \eqref{follwer_SDE}, the dynamics of the stochastic multi-agent system is rewritten as:
\begin{align} \label{MAS_dynamic}
\diff x=(-Lx+v)\diff t+G(x) \diff W_t,
\end{align}  
where $L$ is the graph Laplacian and $G(x):=[g(x_1),\ldots,g(x_n)]^T$.

Next, we introduce stochastic consensus notions for stochastic multi-agent systems which are adapted from \cite{huang2007stochastic}.
{\begin{definition}[Consensus in expectation]\label{con_exp}
	The agents in the stochastic multi-agent system \eqref{MAS_dynamic} are said to reach {consensus in expectation} if the following holds 
	\begin{align}
		\lim_{t\ra\infty}\mathbb{E}[\ol x(t)^T\ol x(t)]=0,
	\end{align} 
	where $\ol x$ is the stack vector of relative positions.
\end{definition}
\begin{definition}[Almost sure consensus]\label{con_as}
	The agents in the stochastic multi-agent system \eqref{MAS_dynamic} are said to reach {almost sure (a.s.) consensus} if the following holds  $\lim_{t\ra\infty}\ol x(t)=0,$ almost surely (i.e., $\mathbb{P}[\lim_{t\ra\infty}\ol x(t)=0]=1$).
\end{definition}}

For later use, we recall the infinitesimal generator (denoted by the operator $\mathcal{L}$) for a stochastic system $S: \diff x=f(x)\diff t+g(x) \diff W_t$, $x\in\R^n$ using It\^o's differentiation \cite{oksendal2013stochastic}. Let $V:\R^n\ra\R_0^+$ be a twice differentiable continuous function. The infinitesimal generator of $V$ associated with a stochastic system $S$ is an operator, denoted by $\mathcal{L} V$, and given by
\begin{align}\label{infinitesimal}
\mathcal{L} V(x)=\frac{\partial V}{\partial x}f(x)+\frac{1}{2}\Tr\Big(g(x)^T\frac{\partial^2 V}{\partial x^2}g(x)\Big),
\end{align}
for all $x\in \R^n$.
\begin{lemma}[\cite{oksendal2013stochastic}]\label{Lemma_Lyapunov}
	Consider a stochastic system $S: \diff x=f(x)\diff t+g(x) \diff W_t$, a twice differentiable continuous function $V:\R^n\ra\R_0^+$, and a constant $q\in\R^+$. If there exists a constant $\kappa\in\R^+$, a function $\ol\psi\in\mathcal{K}_\infty$, and a convex function $\ul\psi\in \mathcal{K}_\infty$ such that for all $x\in\R^n$ the following hold
	\begin{align}
	&\ul\psi(\|x\|^q)\leq V(x)\leq	\ol\psi(\|x\|^q),\label{abcd}\\
	&\mathcal{L}V(x)\leq-\kappa V(x),
	\end{align}
	then the solution of $S$ satisfies
	\begin{align}
	\EE[\|x(t)\|^q]\leq \ul\psi^{-1}(\ol\psi(\|x(0)\|^q)\e^{-\kappa t})
	\end{align}
	for all $t\in\R_0^+$.
\end{lemma}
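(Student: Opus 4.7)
The plan is to follow the standard stochastic Lyapunov comparison argument: transform the hypothesis $\mathcal{L}V \le -\kappa V$ into a deterministic ODE-type bound on $\EE[V(x(t))]$ via It\^o's formula, then invert the sandwich bounds on $V$ using Jensen's inequality to land on $\EE[\|x(t)\|^q]$.

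First, I would consider the auxiliary process $Z(t)\Let \e^{\kappa t} V(x(t))$ and apply It\^o's formula. Because $V$ is twice continuously differentiable and the multiplier $\e^{\kappa t}$ is deterministic and smooth in $t$, one obtains
\begin{align*}
\diff Z(t) = \e^{\kappa t}\bigl(\kappa V(x(t)) + \mathcal{L}V(x(t))\bigr)\diff t + \e^{\kappa t}\tfrac{\partial V}{\partial x}(x(t))\,g(x(t))\diff W_t.
\end{align*}
Using the hypothesis $\mathcal{L}V(x)\le -\kappa V(x)$, the drift is nonpositive. A standard localization by stopping times $\tau_N\Let \inf\{t\ge 0:\|x(t)\|\ge N\}$ kills the martingale part in expectation, and then monotone/Fatou passage as $N\to\infty$ yields $\EE[Z(t\wedge\tau_N)]\le Z(0)=V(x(0))$, hence
\begin{align*}
\EE[V(x(t))] \le V(x(0))\,\e^{-\kappa t}, \qquad t\in\R_0^+.
\end{align*}

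Second, I would exploit the bounds \eqref{abcd}. The lower bound gives $\ul\psi(\|x(t)\|^q)\le V(x(t))$. Because $\ul\psi\in\mathcal{K}_\infty$ is assumed \emph{convex}, Jensen's inequality applies in the direction $\ul\psi(\EE[Y])\le \EE[\ul\psi(Y)]$ with $Y=\|x(t)\|^q$, so that
\begin{align*}
\ul\psi\bigl(\EE[\|x(t)\|^q]\bigr) \le \EE[\ul\psi(\|x(t)\|^q)] \le \EE[V(x(t))] \le V(x(0))\,\e^{-\kappa t} \le \ol\psi(\|x(0)\|^q)\,\e^{-\kappa t}.
\end{align*}
Since $\ul\psi\in\mathcal{K}_\infty$ is strictly increasing with range $\R_0^+$, its inverse $\ul\psi^{-1}$ is monotone and well-defined on $\R_0^+$, and applying it to both sides delivers the claimed estimate $\EE[\|x(t)\|^q]\le \ul\psi^{-1}(\ol\psi(\|x(0)\|^q)\e^{-\kappa t})$.

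The main obstacle I expect is the technical justification of killing the martingale term: a priori the process $\tfrac{\partial V}{\partial x}(x(t))\,g(x(t))$ need not be in $L^2$ uniformly, so I would not take expectation of the It\^o integral directly but rather insert the localizing sequence $\tau_N$, verify that $\EE[V(x(t\wedge\tau_N))]\le V(x(0))$, and then pass to the limit using nonnegativity of $V$ (Fatou) together with nonexplosion of $x$ (which follows from the Lipschitz assumptions on $f$ and $g$ in the applications of the lemma). Aside from that, the rest of the argument is essentially bookkeeping with the comparison functions and the convexity hypothesis on $\ul\psi$, which is precisely what makes the Jensen step legitimate.
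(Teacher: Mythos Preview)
Your argument is correct and is precisely the standard route: It\^o's formula on $\e^{\kappa t}V(x(t))$, localization to discard the stochastic integral, Fatou/monotone passage to obtain $\EE[V(x(t))]\le V(x(0))\e^{-\kappa t}$, and then Jensen's inequality through the convex lower comparison function $\ul\psi$ followed by inversion. Note, however, that the paper does not supply its own proof of this lemma; it is quoted as a known result from \cite{oksendal2013stochastic}, so there is nothing in the paper to compare your derivation against beyond the statement itself.
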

%\begin{proof}		
%	The proof is a consequence of the application of Gronwall’s inequality and of It\^o’s lemma \cite[pp. 80 and 123]{oksendal2013stochastic}. For any time instance $t \in\R_0^+$ and any random initial state $x(0)$ that are $\mathcal F _0$-measurable, one obtains
%	\begin{align}
%	\notag \EE \left[ V(x(t)) \right] &= \EE \Big[V(x(0)) + \int_{0}^{t} \mathcal{L} V(x(s))\diff s\Big]
%	\\ \notag & \le \EE \left[V(x(0))\right] +\EE\Big[\int_{0}^{t} \big(-\kappa V(x(s))\big) \diff s \Big]
%	\\ \notag & \le \EE \left[V(x(0))\right] +\int_{0}^{t} \Big(-\kappa\EE\left[V(x(s))\right]\Big)\diff s,
%	\end{align}
%	which by virtue of Gronwall’s inequality, leads to
%	\begin{align*}
%		\EE[V(x(t))]\leq \e^{-\kappa t}V(x(0)).
%	\end{align*}
%	Now using \eqref{abcd} and Jensen's inequality \cite[pp. 310]{oksendal2013stochastic}, we have
%		\begin{align}
%		\ul\psi(\EE[\|x(t)\|^q])&\leq\EE[\ul\psi(\|x(t)\|^q)]\leq\EE[V(x(t))]\nonumber\\&\leq\e^{-\kappa t}V(x(0))\leq\ol\psi(\|x(0)\|^q)\e^{-\kappa t}.\label{xyz}
%		\end{align}
%	Since $\ul\psi\in\mathcal{K}_\infty$, inequality \eqref{xyz} yields
%	\begin{align*}
%	\EE[\|x(t)\|^q]\leq \ul\psi^{-1}(\ol\psi(\|x(0)\|^q)\e^{-\kappa t}).
%	\end{align*}
%	for all $t\in\R_0^+$.
%\end{proof}	
{\begin{lemma}[\cite{wu2011stochastic}]\label{stoch_barbalat}
	For a stochastic system $S: \diff x=f(x)\diff t+g(x) \diff W_t$, if there exist a twice differentiable continuous function $V:\R^n\ra\R_0^+$ satisfying \eqref{abcd} and 
	$$\mathcal{L}V(x)\leq-\beta(x),$$
	where $\beta:\R^n\ra\R_0^+$ is continuous and nonnegative. Then, for each $x_0\in\R^n$, $\lim\limits_{t\ra\infty}\beta(x(t))=0$ a.s.
\end{lemma}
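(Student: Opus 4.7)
My plan is to combine an It\^o-formula computation with the nonnegative supermartingale convergence theorem to extract both the almost sure boundedness of trajectories and the $L^1$-integrability of $\beta(x(\cdot))$ along almost every sample path, and then to upgrade this integrability to pointwise decay via a sample-path Barbalat-type argument.

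First, applying It\^o's formula to $V(x(t))$ gives
\begin{equation*}
V(x(t)) = V(x_0) + \int_0^t \mathcal{L}V(x(s))\, ds + \int_0^t \frac{\partial V}{\partial x}(x(s))\, g(x(s))\, dW_s.
\end{equation*}
The stochastic integral is a local martingale, and since $\mathcal{L}V \le -\beta \le 0$, the process $V(x(t))$ is a nonnegative local supermartingale; by the supermartingale convergence theorem it converges almost surely to a finite random variable $V_\infty$. Combined with $\underline\psi(\|x\|^q) \le V(x)$ and $\underline\psi \in \mathcal{K}_\infty$, this yields $\sup_{t \ge 0} \|x(t,\omega)\| < \infty$ for almost every $\omega$, so almost every trajectory stays in a random compact set $K(\omega)$. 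Localizing the stochastic integral by $\tau_n \uparrow \infty$, taking expectations at the stopped time $t \wedge \tau_n$, and passing to the limit by monotone convergence (first $n\to\infty$, then $t\to\infty$) produce
\begin{equation*}
\mathbb{E}\Big[\int_0^\infty \beta(x(s))\, ds\Big] \le V(x_0) < \infty,
\end{equation*}
so $\int_0^\infty \beta(x(s,\omega))\, ds < \infty$ almost surely.

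Finally, on the full-measure set where both conclusions above hold, I would finish by a contradiction/Barbalat argument. If $\limsup_{t\to\infty}\beta(x(t,\omega)) > 0$, pick $\varepsilon>0$ and times $t_k \uparrow \infty$ with $\beta(x(t_k,\omega))\ge \varepsilon$; uniform continuity of $\beta$ on $K(\omega)$, together with a short-time oscillation estimate for the It\^o diffusion $x(\cdot,\omega)$ confined to $K(\omega)$ (obtained from the boundedness of $f$ and $g$ on $K(\omega)$ combined with the Burkholder--Davis--Gundy inequality applied to short increments), produces disjoint intervals $[t_k, t_k+\delta]$ of a common length $\delta>0$ on which $\beta(x(\cdot,\omega))\ge \varepsilon/2$, contradicting the $L^1$-integrability above.

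The main obstacle is exactly this last step: unlike the deterministic Barbalat lemma, uniform continuity of $\beta\circ x$ on all of $[0,\infty)$ is not automatic, since Brownian sample paths are only locally H\"older. One must therefore carefully exploit both the almost sure confinement of $x(\cdot,\omega)$ to the compact set $K(\omega)$ coming from Step~1 and uniform short-time oscillation estimates for It\^o diffusions restricted to that set. This stochastic-analytic estimate, rather than the martingale accounting in Steps 1--2, is where the technical weight of the argument in \cite{wu2011stochastic} sits.
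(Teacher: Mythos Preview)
The paper does not prove this lemma: it is quoted from \cite{wu2011stochastic} and invoked as a black box in the proof of Theorem~\ref{thm2}. There is therefore no in-paper argument to compare your attempt against.

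For what it is worth, your outline follows the standard route to a stochastic Barbalat/LaSalle result, and the first two steps---a.s.\ convergence of $V(x(t))$ via the nonnegative-supermartingale theorem, and $\int_0^\infty\beta(x(s))\,ds<\infty$ a.s.\ via localization and monotone convergence---are correct. Your diagnosis that the final pathwise step carries the real weight is also right, but the proposed fix is not usable as written. The Burkholder--Davis--Gundy inequality is an $L^p$ bound and cannot be applied on a single fixed sample path; moreover, even with $x(\cdot,\omega)$ confined to a compact $K(\omega)$, an It\^o diffusion is in general \emph{not} uniformly continuous on $[0,\infty)$ (for fixed $\delta>0$ the oscillation of the stochastic-integral part over length-$\delta$ windows has no uniform a.s.\ bound as the window slides to infinity), so a deterministic Barbalat argument on each path does not go through. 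The usual way to close this gap---and presumably what \cite{wu2011stochastic} does---is to use BDG \emph{probabilistically}: introduce successive crossing times $\tau_k,\sigma_k$ between the level sets $\{\beta\ge 2\varepsilon\}$ and $\{\beta\le\varepsilon\}$, use the a.s.\ integrability of $\beta(x(\cdot))$ to force $\sigma_k-\tau_k\to 0$, and then invoke BDG together with Chebyshev to show that the probability of a displacement of fixed positive size over a vanishing time window is zero, yielding the contradiction.
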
}
\subsection{Prescribed Performance}
This subsection provides preliminary knowledge on the prescribed performance control (PPC) \cite{bechlioulis2008robust}. The aim of PPC is to prescribe the evolution of the relative position $\ol x_k(t)$ within some predefined region which can be expressed in the form of the following inequality 
\begin{align}\label{performane_bound}
-\rho_k(t)<\ol x_k(t)<\rho_k(t) 
\end{align}
for all $t\in\R_0^+$, where $\rho_k:\R_0^+\ra\R^+$, $k\in[1;m]$ are positive, smooth, and strictly decreasing performance functions that introduce the desired predefined bounds for the relative positions. In this work, we consider the following performance function 
\begin{align}\label{performane_function}
\rho_k(t)=(\rho_{k0}-\rho_{k\infty})\e^{-\epsilon_k t}+\rho_{k\infty},
\end{align} 
where $\rho_{k0}$, $\rho_{k\infty}$, and $\epsilon_k$ are positive constants with $\rho_{k0}>\rho_{k\infty}$ and $\rho_{k\infty}=\lim_{t\ra\infty}\rho_k(t)$ represents relative positions at steady state. Now by normalizing $\ol x_k$  with respect to the performance function $\rho_k$, we define the modulating error as $\hat x_k(t)=\frac{\ol x_k(t)}{\rho_k(t)}$ and the corresponding prescribed performance region $\hat{\mathcal{D}}_k:=\{\hat x_k\mid\hat x_k\in(-1,1)\}$. Then the modulated error is transformed through a transformation function $T_k:\hat{\mathcal{D}}_k\ra\R$ such that $T_k(0)=0$ and is chosen as
\begin{align}\label{transformation_fun}
T_k(\hat x_k)=\ln \Big(\frac{1+\hat x_k}{1-\hat x_k}\Big).
\end{align}
The transformed error is then defined as $\xi_k=T_k(\hat x_k)$. By differentiating $\xi_k$ with respect to time, we obtain transformed error dynamics as
\begin{align}
\dot{\xi}_k=\phi_k(\hat x_k,t)[\dot{\ol x}_k+\alpha_k(t)\ol x_k],
\end{align}
where $\phi_k(\hat x_k,t):=\frac{1}{\rho_k(t)}\frac{2}{1-\hat{x}_k^2}>0$ for all $\hat x_k\in(1,-1)$ and $\alpha_k(t):=-\frac{\dot{\rho}_k(t)}{\rho_k(t)}>0$ for all $t\in\R_0^+$ are the normalized Jacobian of the transformation function $T_k$ and the normalized derivative of the performance function $\rho_k$, respectively. Note that, since the evolution of $\ol x_k$ is stochastic, the transform error dynamics can also be written as a stochastic differential equation and the corresponding incremental form will be given in \eqref{errdyn}. It can be verified that if the transformed error is bounded, then the modulated error $\hat x_k$ is constrained within the region $\hat{\mathcal{D}}_k$. This further implies that the error $\ol x_k$ evolves within the predefined performance bound \eqref{performane_bound}. Since we are dealing with stochastic systems, we ensure the satisfaction of considered performance constraints in the sense of $q$th moment by showing the boundedness of $\mathbb{E}[\|\xi_k\|^q]$ (i.e. boundedness of $\xi_k$ in $q$th moment). 

\subsection{Problem Statement}
In this paper, we are interested in designing a consensus control law for stochastic multi-agent systems \eqref{MAS_dynamic} such that they achieve consensus as defined in Definition \ref{con_exp} (or \ref{con_as}) and the evolution of the relative positions between neighboring agents should satisfy some prescribed performance bounds in the sense of $q$th moment. Next, we formally define the problem.
\begin{problem}\label{prob}
	Given a multi-agent system defined by \eqref{MAS_dynamic} with the communication graph $\G=(\V,\E)$ and the prescribed performance functions $\rho_k$, $k\in[1;m]$, as in \eqref{performane_function}, derive a distributed control strategy such that the controlled multi-agent system achieves {consensus in expectation (or almost sure consensus)} while satisfying prescribed performance constraints \eqref{performane_bound} in the sense of $q$th moment. 
\end{problem} 
\section{Consensus Control with Prescribed Performance Guarantees}
In this section, we design the control law for the system \eqref{MAS_dynamic} that guarantees consensus while satisfying prescribed performance constraints in the sense of $q$th moment. Here, we assume that the communicating agents can share information about their performance functions $\rho_k$ and transformation functions $T_k$, $k\in[1;m]$. This means the communication between agents is bidirectional and the graph $\G$ is assumed to be undirected. 

We first rewrite the dynamics of the considered multi-agent system in edge space by multiplying \eqref{MAS_dynamic} with $D^T$ on both sides as follows:
\begin{align}\label{MAS_edgespace}
\diff \ol x=(-L_e\ol x+D^T v)\diff t+D^TG(x) \diff W_t,
\end{align}
where $L_e$ is the edge Laplacian which is positive definite if the graph is a tree \cite{dimarogonas2010stability}. 

From \eqref{MAS_edgespace}, the transformed error dynamics in an incremental form is written as follows:
\begin{align}\label{errdyn}
\diff \xi= \big(\Phi_t(-L_e\ol x+D^Tv+\alpha_t\ol x)\big)\diff t+\Phi_t D^TG(x)  \diff W_t,
\end{align}    
where $\xi=[\xi_1,\ldots,\xi_m]^T$, $\Phi_t=diag\{\phi_1(\hat x_1,t),\ldots,$ $\phi_m(\hat x_m,t)\}$, and $\alpha_t=diag\{\alpha_1(t),\ldots\alpha_m(t)\}$. {Now by considering augmented state-space $\eta=[\overline{x}, \xi]^T$, we define an augmented dynamics as:
\begin{align}
	\diff \eta=&\Big(\begin{bmatrix}
	-L_e&0_m\\
	-\phi_t(L_e+\alpha_t)& 0_m
	\end{bmatrix}\eta+\begin{bmatrix}
	I_m\\\phi_t
	\end{bmatrix} D^T v\Big)\diff t\nonumber\\&+\begin{bmatrix}
	I_m\\\phi_t
	\end{bmatrix} D^T G(x)\diff W_t.\label{augmented}
\end{align}  }

%For the edge dynamics \eqref{MAS_edgespace}, the proposed time-varying distributed control law applied to agents is the composition of the term based on the prescribed performance of the relative positions of the neighbours and is given as
%\begin{align}
%v_i(t)\hspace{-.2em}=\hspace{-.2em}-\hspace{-.2em}\hspace{-.2em}\sum_{k\in\Omega_i}\hspace{-.2em}\hspace{-.2em}\big((\phi_k(\hat x_k,t))^2\ol x_k+\xi_k\big), i\in \V, t\in\R_0^+,
%\end{align} 
%where $\Omega_i=\{k\mid (i,j)=e_k,j\in\mathcal{N}_i\}$, i.e., the set of all the edges that include agent $i\in\V$ as a node. 
%Then the stack vector of external inputs is written as
%\begin{align}\label{controller}
%v(t)=-D\Phi_t^2\ol x-D\xi, \ t\in\R_0^+.
%\end{align}

In the next theorem, we provide a distributed control law and sufficient conditions over system parameters under which we have a solution to Problem \ref{prob} for consensus in expectation. 
{\begin{theorem}\label{thm}
	Consider the stochastic multi-agent system \eqref{MAS_dynamic} with the communication graph being a tree, the predefined performance functions $\rho_k:\R_0^+\ra\R^+, k\in[1;m]$ with decay rates $\epsilon_k$ as in \eqref{performane_function}, the transformation functions $T_k, k\in[1;m]$ as in \eqref{transformation_fun} with $\ol{\epsilon}:=\max\limits_{k\in[1;m]}\epsilon_k$ satisfying
	\begin{align}\label{A0}
	L_e(I_m+\gamma\Phi_t^{-1}+\Phi_t^2)-\ol{\epsilon}I_m&\geq 0
	\end{align}
	for all $t\in\R_0^+$ and a time-varying distributed control law
	\begin{align}\label{cont1}
	v_i(t)\hspace{-.2em}=\hspace{-.2em}-\hspace{-.2em}\hspace{-.2em}\sum_{k\in\Omega_i}\hspace{-.2em}\hspace{-.2em}\big((\phi_k(\hat x_k,t))^2\ol x_k+\xi_k\big), i\in \V, t\in\R_0^+,
	\end{align} 
	where $\Omega_i=\{k\mid (i,j)=e_k,j\in\mathcal{N}_i\}$, i.e., the set of all the edges that include agent $i\in\V$ as a node. If there exist constants $\kappa,\gamma\in\R^+$ such that   
	\begin{subequations}\label{kappa_condition}
		\begin{align}
		%\max\{(q-1)\delta^{\frac{q}{q-1}}, \frac{\gamma}{\delta^q}\} \|L_e\|&<\kappa,\label{A10}\\
		\Phi_t L_e&\geq 2\kappa I_m,\label{A1}\\
		L_e(I_m+\Phi_t^2)-\frac{\mathsf{k}_g^2}{2}(I_m+\frac{1}{\gamma}\Phi_t^2)&\geq 2\kappa I_m ,\label{A2}
		\end{align}
	\end{subequations}
	for all $ t\in\R_0^+$, where $\mathsf{k}_g$ is the Lipschitz constant of the diffusion term $g(\cdot)$ in \eqref{agent_dynamic}, then the controller in \eqref{cont1} achieves consensus in expectation for any initial relative position $\ol x_k(0)\in(-\rho_k(0),\rho_k(0)), \ k\in[1;m]$ while guaranteeing prescribed performance in \eqref{performane_bound} in the sense of $2nd$ moment.
\end{theorem}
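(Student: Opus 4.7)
The plan is to invoke Lemma~\ref{Lemma_Lyapunov} on the augmented system \eqref{augmented} using the quadratic Lyapunov candidate
\[
V(\eta) = \tfrac{1}{2}\bigl(\ol{x}^T\ol{x} + \xi^T\xi\bigr),
\]
which is twice continuously differentiable, convex, and satisfies the sandwich bound \eqref{abcd} with $q=2$ and $\ul{\psi}(r) = \ol{\psi}(r) = r/2$. The entire task then reduces to verifying $\mathcal{L}V(\eta) \leq -\kappa V(\eta)$ on the domain $\{\hat{x}_k \in (-1,1), k \in [1;m]\}$ under (A0)--(A2). Exponential decay of $\EE[\|\eta(t)\|^2]$ then simultaneously yields consensus in expectation from the $\ol{x}$-component and 2nd-moment boundedness of $\xi$, which via the logarithmic transformation \eqref{transformation_fun} enforces the prescribed performance bound \eqref{performane_bound}.

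I would first apply It\^o's rule to $V$ along \eqref{augmented}. With $\partial V/\partial \eta = \eta^T$ and $\partial^2 V/\partial \eta^2 = I_{2m}$, the drift piece of $\mathcal{L}V$ is
\[
\ol{x}^T\bigl(-L_e\ol{x}+D^Tv\bigr) + \xi^T\bigl(-\Phi_t(L_e+\alpha_t)\ol{x}+\Phi_t D^T v\bigr),
\]
and the diffusion piece is $\tfrac{1}{2}(D^TG(x))^T(I_m+\Phi_t^2)(D^TG(x))$. Writing \eqref{cont1} in stacked form with the signed edge convention gives $v = -D(\Phi_t^2\ol{x}+\xi)$, so $D^T v = -L_e(\Phi_t^2\ol{x}+\xi)$, collapsing the drift into quadratic forms in $\ol{x}$ and $\xi$ with coefficients built from $L_e$, $\Phi_t$, and $\alpha_t$. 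The diffusion piece is bounded entrywise via $|(D^TG(x))_k| = |g(x_i)-g(x_j)| \leq \mathsf{k}_g|\ol{x}_k|$, producing the contribution $\tfrac{\mathsf{k}_g^2}{2}\ol{x}^T(I_m+\Phi_t^2)\ol{x}$ to the $\ol{x}$-bucket.

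Next, I would decompose the indefinite cross terms (namely $-\ol{x}^TL_e\xi$, $-\xi^T\Phi_t\alpha_t\ol{x}$, and $-\xi^T\Phi_t L_e\Phi_t^2\ol{x}$) by Young's inequality with a common parameter $\gamma>0$, sending one half to the $\ol{x}$-bucket and the other to the $\xi$-bucket. The bound $\alpha_t\preceq\ol{\epsilon}I_m$, which is immediate from the explicit shape \eqref{performane_function}, is what produces the $\ol{\epsilon}I_m$ term in (A0). Collecting all contributions yields a clean estimate
\[
\mathcal{L}V(\eta) \leq -\ol{x}^T M_1(t)\ol{x} - \xi^T M_2(t)\xi,
\]
where (A2) (absorbing the diffusion and one half of the Young splits) ensures $M_1(t)\succeq\kappa I_m$, while (A1) together with (A0) (absorbing the $\alpha_t$ cross term) ensures $M_2(t)\succeq\kappa I_m$, yielding $\mathcal{L}V \leq -\kappa V$.

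Lemma~\ref{Lemma_Lyapunov} then gives $\EE[\|\eta(t)\|^2] \leq \|\eta(0)\|^2 e^{-\kappa t}$, so $\EE[\ol{x}(t)^T\ol{x}(t)]\to 0$ (consensus in expectation per Definition~\ref{con_exp}) and $\EE[\|\xi(t)\|^2]$ is uniformly bounded, which through \eqref{transformation_fun} enforces $\hat{x}_k\in(-1,1)$ (equivalently \eqref{performane_bound}) in the 2nd-moment sense; a standard stopping-time argument, using the blow-up of $\xi_k$ as $\hat{x}_k\to\pm 1$ together with the bounded Lyapunov value, rules out exit from the admissible domain. The principal obstacle I anticipate is the careful bookkeeping of the Young's-inequality split so that the parameter $\gamma$ lands consistently as $\gamma\Phi_t^{-1}$ in (A0) and as $\tfrac{1}{\gamma}\Phi_t^2$ in (A2); the tree hypothesis is essential here, since only then is $L_e = D^T D$ positive definite, keeping $M_1(t)$ and $M_2(t)$ strictly positive and preventing cancellations that would otherwise leave the $\ol{x}$-part marginally indefinite.
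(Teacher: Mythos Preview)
Your overall strategy (quadratic Lyapunov function, It\^o computation, then Lemma~\ref{Lemma_Lyapunov}) matches the paper, but two concrete choices deviate in a way that prevents you from landing on conditions \eqref{A0}--\eqref{A2} as stated.

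First, the paper does \emph{not} use the unweighted candidate $V=\tfrac12(\ol x^T\ol x+\xi^T\xi)$; it uses the $\gamma$-weighted version
\[
V(\xi,\ol x)=\tfrac12\xi^T\xi+\tfrac{\gamma}{2}\ol x^T\ol x.
\]
The factor $\gamma\Phi_t^{-1}$ in \eqref{A0} and the factor $\tfrac{1}{\gamma}\Phi_t^2$ in \eqref{A2} both originate from this weighting: the $\gamma$ premultiplies the $\ol x$-drift and the $\ol x$-diffusion, so the cross term $-\gamma\,\ol x^TL_e\xi$ carries $\gamma$ already, and the diffusion contribution splits as $\tfrac{\gamma}{2}\mathsf{k}_g^2\ol x^T\ol x+\tfrac12\mathsf{k}_g^2\ol x^T\Phi_t^2\ol x$, which after pulling out $\gamma$ gives exactly $\tfrac{\mathsf{k}_g^2}{2}(I_m+\tfrac{1}{\gamma}\Phi_t^2)$. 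With your unweighted $V$ these constants cannot appear; a Young's-inequality parameter would instead introduce powers of $L_e$ and different $\Phi_t$-dependences that do not match the theorem's hypotheses.

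Second, and more importantly, the paper never invokes Young's inequality on the cross terms. The key structural observation you are missing is that $T_k$ is odd and strictly increasing with $T_k(0)=0$, hence $\xi_k\,\ol x_k\geq 0$ componentwise. After collecting all cross contributions into the single expression
\[
\xi^T\Phi_t\bigl(-L_e(I_m+\gamma\Phi_t^{-1}+\Phi_t^2)+\ol\epsilon I_m\bigr)\ol x,
\]
condition \eqref{A0} together with this sign property makes the whole cross block non-positive outright, with no splitting. Young's inequality would instead \emph{add} positive terms to both the $\ol x$- and $\xi$-buckets and would force strictly stronger (and differently shaped) sufficient conditions than \eqref{A0}--\eqref{A2}. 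Also note a sign slip in your list: the $\alpha_t$ cross term enters as $+\xi^T\Phi_t\alpha_t\ol x$, not with a minus, and it is bounded via $\alpha_t\prec\ol\epsilon I_m$; this is the ``bad'' cross term that \eqref{A0} is designed to absorb.

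In short: replace your Lyapunov candidate by the $\gamma$-weighted one, drop the Young's-inequality plan, and use the monotonicity of $T_k$ to kill the entire cross block via \eqref{A0}. Then \eqref{A1} handles the pure $\xi^T\Phi_tL_e\xi$ term and \eqref{A2} handles the pure $\ol x$-terms (including the diffusion), yielding $\mathcal{L}V\leq-\kappa V$ directly.
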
}
\begin{proof}
	Consider a Lyapunov-like function $V:\R^{2m}\ra\R_0^+$ as
	\begin{align*}
	V(\xi,\ol x)=\frac{1}{2}\xi^T\xi+\frac{\gamma}{2}\ol x^T\ol x,
	\end{align*}
	where $\gamma\in\R_0^+$ is a constant satisfying \eqref{kappa_condition}. One can readily verify that the function $V$ satisfies condition \eqref{abcd} in Lemma \ref{Lemma_Lyapunov} with functions $\ul{\psi}(s):=\frac{1}{2}\min\{1,\gamma\} s$ and $\ol{\psi}(s):=\frac{1}{2}\max\{1,\gamma\} s$ for all $s\in\R_0^+$.
	The corresponding infinitesimal generator as defined in \eqref{infinitesimal} along the multi-agent system in edge space \eqref{MAS_edgespace} and transformed error dynamics \eqref{errdyn} is given by
	\begin{align*}
	\mathcal{L}V(\xi,\ol x)&=\xi^T\Phi_t(-L_e\ol x+D^Tv+\alpha_t\ol x)\\&\hspace{-.2em}+\hspace{-.2em}\gamma\ol x^T(\hspace{-.2em}-\hspace{-.1em}L_e\ol x\hspace{-.2em}+\hspace{-.2em}D^T v)\hspace{-.2em}+\hspace{-.2em}\frac{1}{2}G(x)^TD\Phi_t^T\Phi_t D^T\hspace{-.1em}G(x)\hspace{-.2em}\\&+\hspace{-.2em}\frac{\gamma}{2}G(x)^T\hspace{-.1em}DD^TG(x).
	\end{align*}
	The stack vector of external inputs \eqref{cont1} is written as
	\begin{align}\label{controller1}
	v(t)=-D\Phi_t^2\ol x-D\xi, \ t\in\R_0^+.
	\end{align}
	By substituting control law \eqref{controller1} and using the Lipschitz continuity of $g(x_i)$, we get
	\begin{align}\label{eq_thm1}
	\mathcal{L}&V(\xi,\ol x)\hspace{-.2em}\leq\hspace{-.2em}-\xi^T\Phi_t L_e\ol x\hspace{-.1em}-\hspace{-.1em}\xi^T\Phi_t D^TD\Phi_t^2\ol x-\xi^T\Phi_t D^T D\xi\nonumber\\&+\xi^T\Phi_t\alpha_t\ol x-\gamma\ol x^T L_e\ol x-\gamma\ol x^TD^T D\Phi_t^2\ol x\nonumber\\&-\gamma\ol x^TD^T D\xi+\frac{1}{2}\mathsf{k}_g^2\ol x^T\Phi_t^2 \ol x+\frac{\gamma}{2}\mathsf{k}_g^2(\ol x^T\ol x)
	%+\frac{q-2}{2}\mathsf{k}_g^2\ol x^T\Phi_t(\ol x\ \ol x^T(\ol x^T \ol x)^{\frac{q}{2}-2})\Phi_t \ol x\nonumber\\&+\frac{q-2}{2}\gamma \mathsf{k}_g^2\ol x^T(\ol x\ \ol x^T(\ol x^T \ol x)^{\frac{q}{2}-2}) \ol x.
	%= &\ \frac{\gamma}{2}\ol x^T\big(-2 L_e+\mathsf{k}_g^2I_m\big)\ol x-\xi^T\Phi_t D^T D\Phi_t\xi \\&+\frac{\gamma}{2}\ol x^T(-2L_e+\frac{\mathsf{k}_g^2}{\gamma}I_m)\Phi_t^2\ol x\\&+ \xi^T\big(\Phi_t(- L_e+\alpha_t-\gamma D^TD)\big)\ol x\\&-\xi^T\Phi_t D^TD\Phi_t^2\ol x.
	\end{align} 
	Since the considered graph is a tree, we know that the edge Laplacian $L_e=D^TD$ is positive definite. 
	%Now, to handle $-\gamma\ol x^T(\ol x^T\ol x)^{\frac{q}{2}-1}D^T D\xi$ term, we use Young's inequality \cite{krstic1995nonlinear} as 
	%\begin{align}\label{young}
	%ab\leq \frac{\delta^r}{r}\|a\|^r+\frac{1}{s\delta^s}\|b\|^s,
	%\end{align}
	%where $\delta>0$, constants $r,s>1$ satisfying condition $(r-1)(s-1)=1$, and $a,b\in\R$. Now by using consistency of norms and applying Young's inequality \eqref{young}, we have
	%\begin{align}
%	-\gamma\ol x^T&(\ol x^T\ol x)^{\frac{q}{2}-1}D^T D\xi\leq\gamma\|L_e\| \|\ol x\|^{q-1}\|\xi\|\nonumber\\&
%	\leq \gamma\|L_e\|\Big(\frac{\delta^{r}}{r}(\ol x^T\ol x)^{\frac{q}{2}}+\frac{1}{s\delta^{s}}(\xi^T\xi)^{\frac{q}{2}}\Big),
%	\end{align} 
%	where $r=\frac{q}{q-1}$, $s=q$, and $\delta$ is any positive constant.
	We also know from the fact that $\alpha_k(t)=-\frac{\dot{\rho_k}(t)}{\rho_k(t)}>0$ and \eqref{performane_function} that $\alpha_k(t)<\epsilon_k$ for all $t\in\R_0^+$ and hence $\alpha_t<\ol\epsilon:=\max\limits_{k\in[1;m]}\epsilon_k$. With the aforementioned facts, inequality \eqref{eq_thm1} reduces to 
	\begin{align*}%\label{eq_thm2}
	\mathcal{L}&V(\xi,\ol x)\leq\ \xi^T\big(\Phi_t(-L_e(I_m+\gamma\Phi_t^{-1}+\Phi_t^2)+\ol\epsilon I_m)\big)\ol x\nonumber
	\\&-\gamma\ol x^T\big(L_e(I_m+\Phi_t^2)-\frac{\mathsf{k}_g^2}{2}(I_m+\frac{1}{\gamma}\Phi_t^2)\big)\ol x-\xi^T\Phi_t L_e\xi.
	\end{align*}
	Note that $\Phi_t$ is a positive definite matrix and $\xi_k(\ol x_k/\rho_k)$ is strictly increasing with $\xi_k(0)=0$ which implies that $\xi_k(\ol x_k/\rho_k)\ol x_k\geq 0$. Thus, by using condition \eqref{A0}, we can readily verify that the first term is non-positive.  
	Further, by following conditions in \eqref{kappa_condition}, one obtains $$\mathcal{L}V(\xi,\ol x)\leq -\kappa V(\xi,\ol x)$$ 
	with a constant $\kappa\in\R^+$ satisfying \eqref{kappa_condition}. Now by following the result of Lemma \ref{Lemma_Lyapunov}, one ensures the consensus in expectation; and since  $\ol x_k(0)\in(-\rho_k(0),\rho_k(0)), \ k\in[1;m]$ and $\EE[\|\xi(t)\|^2]$ is bounded for all $t\in\R_0^+$, we ensure that the $\ol x$ satisfies prescribed transient constraints \eqref{performane_bound} in the sense of $2nd$ moment.
	%\footnote{If the function is bounded by $q$th moment, it is also bounded in any moment $m\leq q$.}
	This concludes the proof.  
\end{proof}
\begin{remark}
		Note that the proposed time-varying distributed control law \eqref{cont1} applied to agents is the composition of the term based on the prescribed performance and the relative positions of the neighbours.
\end{remark}
\begin{remark}
	Since the term $\phi_k(\hat{x}_k,t)$ is lower bounded by $\min\limits_{t\in\R_0^+,\hat x_k\in(-1,1)}\frac{1}{\rho_k(t)}\frac{2}{1-\hat{x}_k^2}=\frac{2}{\rho_{k0}}$, $k\in[1;m]$ for all $t\in\R_0^+$, one can find a upper bound for $\kappa$ satisfying inequalities \eqref{kappa_condition} for appropriate values of $\gamma$. From \eqref{A2}, we would like to emphasize that smaller values of $\kappa$ imply that one can handle stronger noise, i.e., bigger value of $\mathsf{k}_g$. Moreover, since one can always find a constant $\gamma$ for any value of $\ol{\epsilon}$ satisfying \eqref{A0}, we can use the proposed controller to have a result for prescribed performance functions with any decay rate. 
\end{remark}
{The next theorem provides result for ensuring almost sure consensus while considering performance constraints. 
\begin{theorem}\label{thm2}
	Consider the stochastic multi-agent system \eqref{MAS_dynamic} with the communication graph being a tree, the predefined performance functions $\rho_k:\R_0^+\ra\R^+, k\in[1;m]$ as in \eqref{performane_function}, and a time-varying distributed control law 
	\begin{align}\label{control2}
		v_i(t)\hspace{-.2em}=\hspace{-.2em}-\hspace{-.2em}\hspace{-.2em}\sum_{k\in\Omega_i}(\phi_k(\hat x_k,t))^2\ol x_k, i\in \V, t\in\R_0^+.
	\end{align}
	If 
	\begin{align}
		L_e-\frac{1}{2}k_g^2I_m>0,
	\end{align}
	where $\mathsf{k}_g$ is the Lipschitz constant of the diffusion term $g(\cdot)$ in \eqref{agent_dynamic}, then the controller in \eqref{control2} achieves almost sure consensus for any initial relative position $\ol x_k(0)\in(-\rho_k(0),\rho_k(0)), \ k\in[1;m]$ while guaranteeing prescribed performance in \eqref{performane_bound} in the sense of $q$th moment, where $q\in\{1,2\}$.
\end{theorem}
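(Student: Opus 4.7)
The plan mimics the Lyapunov-based template used for Theorem \ref{thm}, but exploits the fact that the simpler controller \eqref{control2} drops the $-D\xi$ feedback term present in \eqref{cont1}. This allows an analysis via a Lyapunov-like function in $\ol x$ alone, and the almost-sure conclusion will come from the stochastic Barbalat-type result Lemma \ref{stoch_barbalat} rather than from the exponential-decay Lemma \ref{Lemma_Lyapunov}. First I would substitute the stack form $v = -D \Phi_t^2 \ol x$ of \eqref{control2} into the edge-space dynamics \eqref{MAS_edgespace} to obtain
\begin{align*}
\diff \ol x = -L_e(I_m + \Phi_t^2) \ol x \diff t + D^T G(x) \diff W_t,
\end{align*}
and work with the candidate $V(\ol x) = \tfrac{1}{2} \ol x^T \ol x$, which satisfies the sandwich \eqref{abcd} with $\ul\psi(s) = \ol\psi(s) = s/2$ and $q = 2$.

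Next, I would compute the infinitesimal generator \eqref{infinitesimal} of $V$ along this closed loop. By the Lipschitz continuity of $g$, exactly as in the derivation leading to \eqref{eq_thm1}, the diffusion contribution is bounded by $\tfrac{1}{2} G(x)^T L G(x) \leq \tfrac{1}{2} \mathsf{k}_g^2 \ol x^T \ol x$. Handling the $\Phi_t^2$-weighted piece of the drift in the quadratic form by exploiting the diagonality and positivity of $\Phi_t^2$, the target estimate is
\begin{align*}
\mathcal{L}V(\ol x) \leq -\ol x^T \bigl(L_e - \tfrac{1}{2} \mathsf{k}_g^2 I_m\bigr) \ol x =: -\beta(\ol x).
\end{align*}
Under the hypothesis $L_e - \tfrac{1}{2} \mathsf{k}_g^2 I_m > 0$, $\beta$ is continuous, non-negative, and vanishes only at $\ol x = 0$. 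Lemma \ref{stoch_barbalat} then yields $\lim_{t \to \infty}\beta(\ol x(t)) = 0$ almost surely, which by strict positivity of $\beta$ off the origin gives $\lim_{t \to \infty}\ol x(t) = 0$ a.s., exactly the almost sure consensus of Definition \ref{con_as}.

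For the prescribed performance in the $q$th moment sense with $q \in \{1, 2\}$, the same bound can be re-read as $\mathcal{L}V \leq -2 \lambda V$ with $\lambda = \lambda_{\min}\bigl(L_e - \tfrac{1}{2} \mathsf{k}_g^2 I_m\bigr) > 0$, so Lemma \ref{Lemma_Lyapunov} delivers $\EE[\|\ol x(t)\|^2] \leq \|\ol x(0)\|^2 \e^{-2 \lambda t}$, and Jensen's inequality extends this to an exponential bound on $\EE[\|\ol x(t)\|]$. Because each $T_k$ is smooth and strictly monotone on $(-1, 1)$ with $T_k(0) = 0$, and the initial data satisfy $\ol x_k(0) \in (-\rho_k(0), \rho_k(0))$, boundedness of $\EE[\|\ol x(t)\|^q]$ transfers to boundedness of $\EE[\|\xi(t)\|^q]$ for $q \in \{1, 2\}$, which is the $q$th moment form of \eqref{performane_bound}.

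The main obstacle I anticipate is the rigorous treatment of the cross term $\ol x^T L_e \Phi_t^2 \ol x$ appearing in the drift: because $L_e$ and the diagonal $\Phi_t^2$ do not generally commute, the symmetric part $\tfrac{1}{2}(L_e \Phi_t^2 + \Phi_t^2 L_e)$ need not be positive semi-definite, so showing that this piece does not offset the dominant $-\ol x^T L_e \ol x$ contribution will likely require a structural argument leveraging $L_e = D^T D$ together with the diagonality of $\Phi_t$, or absorption of the cross term via a Young-type inequality with a judiciously chosen weight.
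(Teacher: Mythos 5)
Your overall architecture (generator estimate, then Lemma \ref{stoch_barbalat} for the almost-sure claim) matches the paper's, but you work with $V(\ol x)=\tfrac12\ol x^T\ol x$ on the edge space alone, whereas the paper uses the augmented state $\eta=[\ol x,\xi]^T$ from \eqref{augmented} and the Lyapunov-like function $V(\eta)=(\tfrac1q\eta^T\eta)^{q/2}$, $q\in\{1,2\}$. That difference is not cosmetic, and it is where your proposal has a genuine gap: the final step, ``boundedness of $\EE[\|\ol x(t)\|^q]$ transfers to boundedness of $\EE[\|\xi(t)\|^q]$,'' is false. The transformation $T_k(\hat x_k)=\ln\bigl(\tfrac{1+\hat x_k}{1-\hat x_k}\bigr)$ is unbounded on $(-1,1)$ and diverges as $\hat x_k\to\pm1$, so no moment bound on $\ol x$ --- not even exponential decay of $\EE[\|\ol x(t)\|^2]$ --- prevents $\EE[|\xi_k|^q]$ from being infinite, since sample paths may approach or cross the funnel boundary $\pm\rho_k(t)$ with positive probability. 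The paper's stated criterion for ``prescribed performance in the $q$th moment'' is precisely boundedness of $\EE[\|\xi_k\|^q]$, and this is why the paper puts $\xi$ \emph{inside} the Lyapunov function: from $\mathcal{L}V(\eta)\le 0$ it gets $\EE[\|\eta(t)\|^q]$ bounded, hence $\EE[\|\xi(t)\|^q]$ bounded, directly. A related well-posedness issue compounds this: your closed-loop drift contains $\Phi_t^2$, and $\phi_k(\hat x_k,t)$ blows up at the funnel boundary, so without an argument that keeps the state strictly inside the funnel the closed-loop SDE you analyze is not even globally defined. You would need to either adopt the augmented Lyapunov function as the paper does, or supply a separate invariance argument for the set $\{\hat x_k\in(-1,1)\}$; neither is present in your proposal.

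On the cross term $\ol x^T L_e\Phi_t^2\ol x$ that you flag as the main obstacle: you are right that the symmetric part of $L_e\Phi_t^2$ is not automatically positive semi-definite, and you should be aware that the paper does not resolve this either --- it simply asserts that the block matrix built from $(L_e-\tfrac12\mathsf{k}_g^2I_m)(I_m+\Phi_t^2)$ and $\Phi_t(L_e+\alpha_t+L_e\Phi_t^2)$ yields $\mathcal{L}V(\eta)\le0$ because $\Phi_t$ and $L_e$ are positive definite. So identifying this as a point needing a structural or Young-inequality argument is a fair observation, but it does not substitute for the missing moment bound on $\xi$, which is the decisive defect of the $\ol x$-only route.
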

\begin{proof}
	Consider a Lyapunov-like function $V:\R^{2m}\ra\R_0^+$ as
	\begin{align*}
	V(\eta)=\Big(\frac{1}{q} \eta^T\eta\Big)^\frac{q}{2},
	\end{align*}
	where $q\in\{1,2\}$. One can readily verify that the function $V$ satisfies condition \eqref{abcd} with functions $\ul{\psi}(s):=(\frac{1}{q})^{\frac{q}{2}} s$ and $\ol{\psi}(s):=(\frac{2m}{q})^{\frac{q}{2}} s$ for all $s\in\R_0^+$.
	The corresponding infinitesimal generator as defined in \eqref{infinitesimal} along the augmented dynamics \eqref{augmented} of the multi-agent system in edge space \eqref{MAS_edgespace} and transformed error dynamics \eqref{errdyn} is given by
	\begin{align*}
	&\mathcal{L}V(\eta)=\eta^T\Big(\frac{1}{q} \eta^T\eta\Big)^{\frac{q}{2}-1}\Big(\begin{bmatrix}
	-L_e&0_m\\
	-\phi_t(L_e+\alpha_t)& 0_m
	\end{bmatrix}\eta\\&+\begin{bmatrix}
	I_m\\\phi_t
	\end{bmatrix} D^T v\Big) + \frac{1}{2} G(x)^TD\begin{bmatrix}
	I_m\\\Phi_t
	\end{bmatrix}^T\Bigg(\Big(\frac{1}{q} \eta^T\eta\Big)^{\frac{q}{2}-1}I_{2m}\\&+\frac{q-2}{q}\eta\eta^T\Big(\frac{1}{q} \eta^T\eta\Big)^{\frac{q}{2}-2}\Bigg) \begin{bmatrix}
	I_m\\\phi_t
	\end{bmatrix} D^T G(x).
	%\\&\hspace{-.2em}+\hspace{-.2em}\gamma\ol x^T(\ol x^T \ol x)^{\frac{q}{2}-1}(\hspace{-.2em}-\hspace{-.1em}L_e\ol x\hspace{-.2em}+\hspace{-.2em}D^T v)\hspace{-.2em}+\hspace{-.2em}\frac{1}{2}G(x)^TD\Phi_t^T\Big(\hspace{-.2em}(\ol x^T \ol x)^{\frac{q}{2}-1}I_m\\&\hspace{-.2em}+\hspace{-.2em}(q\hspace{-.2em}-\hspace{-.2em}2)\ol x\ \ol x^T(\ol x^T \ol x)^{\frac{q}{2}\hspace{-.1em}-\hspace{-.1em}2}\hspace{-.2em}\Big)\Phi_t D^T\hspace{-.1em}G(x)\hspace{-.2em}+\hspace{-.2em}\frac{\gamma}{2}G(x)^T\hspace{-.1em}D\hspace{-.1em}\Big(\hspace{-.2em}(\ol x^T \ol x)^{\frac{q}{2}\hspace{-.1em}-\hspace{-.1em}1}\hspace{-.2em}I_m\\&\hspace{-.2em}+\hspace{-.2em}(q\hspace{-.2em}-\hspace{-.2em}2)\ol x\ \ol x^T(\ol x^T \ol x)^{\frac{q}{2}-2}\hspace{-.2em}\Big)D^TG(x).
	\end{align*}
	The stack vector of external inputs is written as
	\begin{align}\label{controller2}
	v(t)=-D\Phi_t^2\ol x, \ t\in\R_0^+.
	\end{align}
	By substituting control law \eqref{controller2} and using the Lipschitz continuity of $g(x_i)$ and the fact that $q\in\{1,2\}$, the $\mathcal{L}V(\eta)$ is rewritten as
	\begin{align*}
		&\mathcal{L}V(\eta)\leq\eta^T\Big(\frac{1}{q} \eta^T\eta\Big)^{\frac{q}{2}-1}\Big(\begin{bmatrix}
	-L_e&0_m\\
	-\Phi_t(L_e+\alpha_t)& 0_m
	\end{bmatrix}\eta\\&\ -\begin{bmatrix}
	I_m\\\Phi_t
	\end{bmatrix} D^T D\Phi_t^2\ol x\Big) + \frac{1}{2} k_g^2\ol x^T\Big(\frac{1}{q} \eta^T\eta\Big)^{\frac{q}{2}-1}\begin{bmatrix}
	I_m\\\Phi_t
	\end{bmatrix}^T\begin{bmatrix}
	I_m\\\Phi_t
	\end{bmatrix} \ol x.\\
	&=\eta^T\Big(\frac{1}{q} \eta^T\eta\Big)^{\frac{q}{2}-1}\hspace{-.2em}\Big(\hspace{-.2em}\begin{bmatrix}
	-L_e&0_m\\
	-\Phi_t(L_e\hspace{-.2em}+\hspace{-.2em}\alpha_t)& 0_m
	\end{bmatrix}\hspace{-.2em}\eta\hspace{-.2em}-\hspace{-.2em}
	\begin{bmatrix}
	I_m\\\Phi_t
	\end{bmatrix} \hspace{-.2em}
	L_e
	\begin{bmatrix}\hspace{-.2em}
	\Phi_t^2\\0_m
	\end{bmatrix}\hspace{-.1em}\eta\hspace{-.2em}\Big) \\&\ + \frac{1}{2} k_g^2\Big(\frac{1}{q} \eta^T\eta\Big)^{\frac{q}{2}-1}\eta^T\begin{bmatrix}
	\begin{bmatrix}
	I_m\\\Phi_t
	\end{bmatrix}^T\begin{bmatrix}
	I_m\\\phi_t
	\end{bmatrix} &0_m\\
	0_m & 0_m
	\end{bmatrix} \eta\\&
	=\eta^T\Big(\frac{1}{q} \eta^T\eta\Big)^{\frac{q}{2}-1}\begin{bmatrix}
	-L_e(I_m\hspace{-.2em}+\hspace{-.2em}\Phi_t^2)+\frac{1}{2}k_g^2(I_m\hspace{-.2em}+\hspace{-.2em}\Phi_t^2)&0_m\\
	-\Phi_t(L_e+\alpha_t+L_e\Phi_t^2)& 0_m
	\end{bmatrix}\hspace{-.2em}\eta\\&
	=-\eta^T\Big(\frac{1}{q} \eta^T\eta\Big)^{\frac{q}{2}-1}\begin{bmatrix}
	(L_e-\frac{1}{2}k_g^2I_m)(I_m+\Phi_t^2)&0_m\\
	\Phi_t(L_e+\alpha_t+L_e\Phi_t^2)& 0_m
	\end{bmatrix}\eta\\&=-\beta(\eta).
	\end{align*}
	Note that $\Phi_t$ and $L_e$ is a positive definite matrices. Hence, if $L_e-\frac{1}{2}k_g^2I_m>0$, we get $\mathcal{L}V(\eta)\leq 0$. This implies that $\mathbb{E}[\|\eta(t)\|^q]$ is bounded and hence  $\mathbb{E}[\|\xi(t)\|^q]$ is bounded for all $t\in\R_0^+$ and all $\ol x_k(0)\in(-\rho_k(0),\rho_k(0))$, $k\in[1;m]$. This ensures that the $\ol x$ satisfies prescribed transient constraints \eqref{performane_bound} in the sense of $q$th moment, where $q\in\{1,2\}$. Now, since $\beta(x)$ is continuous and nonnegative, by utilizing stochastic Barbalat's lemma (Lemma \ref{stoch_barbalat}), we have $\lim\limits_{t\ra\infty}\beta(\eta(t))=0$ almost surely. This implies almost sure consensus (i.e. $\lim\limits_{t\ra\infty}\ol x(t)=0$ a.s.).  
\end{proof}}

 \begin{figure}[t]
	\centering
	\includegraphics[scale=0.045]{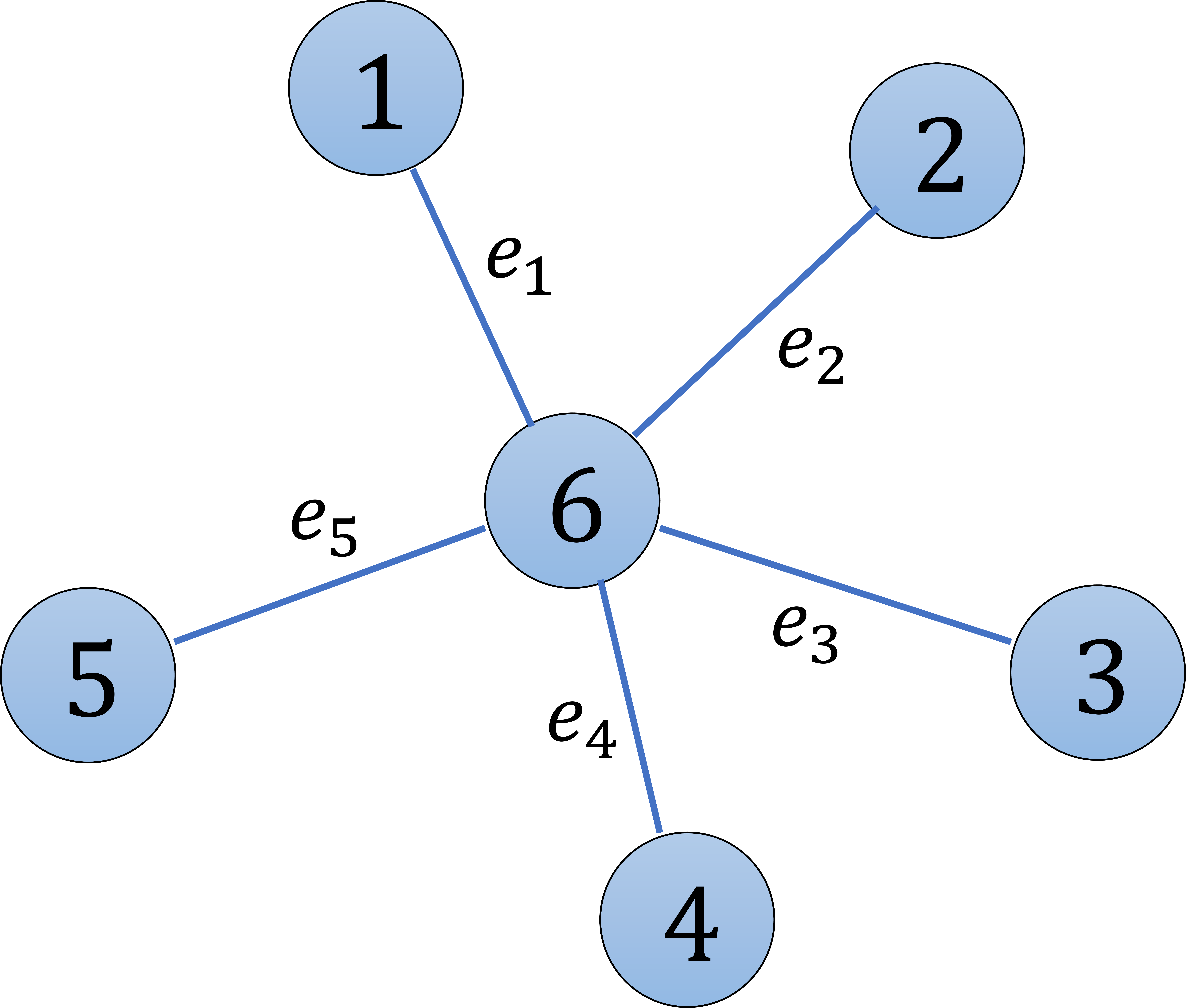}
	\caption{Communication graph with tree topology}
	\label{fig:star}
\end{figure} 
\begin{figure*}[t]
	\centering
	\includegraphics[scale=0.37]{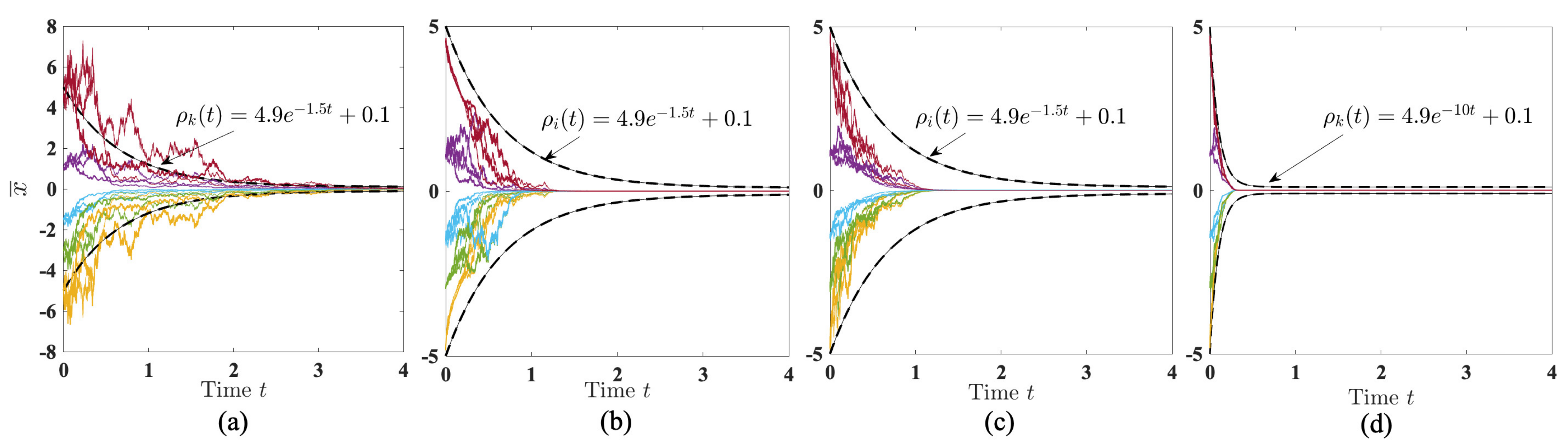}
	\caption{Several realizations of relative positions of stochastic multi-agent system \eqref{num_ex} starting from $\ol x(0)=[-4.9,1,-3,-1.5,4.5]^T$ $(a)$ without prescribed performance control law; $(b)$ with the proposed control law \eqref{control2}; $(c)$ and $(d)$ are with the proposed control law \eqref{cont1} and decay rates of prescribed performance functions $\epsilon_k=1.5$ and $\epsilon_k=10$ $\forall k\in[1;5]$, respectively. The dashed black lines represents prescribed performance constraints.}
	\label{fig:PPC}
\end{figure*} 
\section{Numerical Examples}
 \begin{figure}[t]
	\centering
	\includegraphics[scale=0.45]{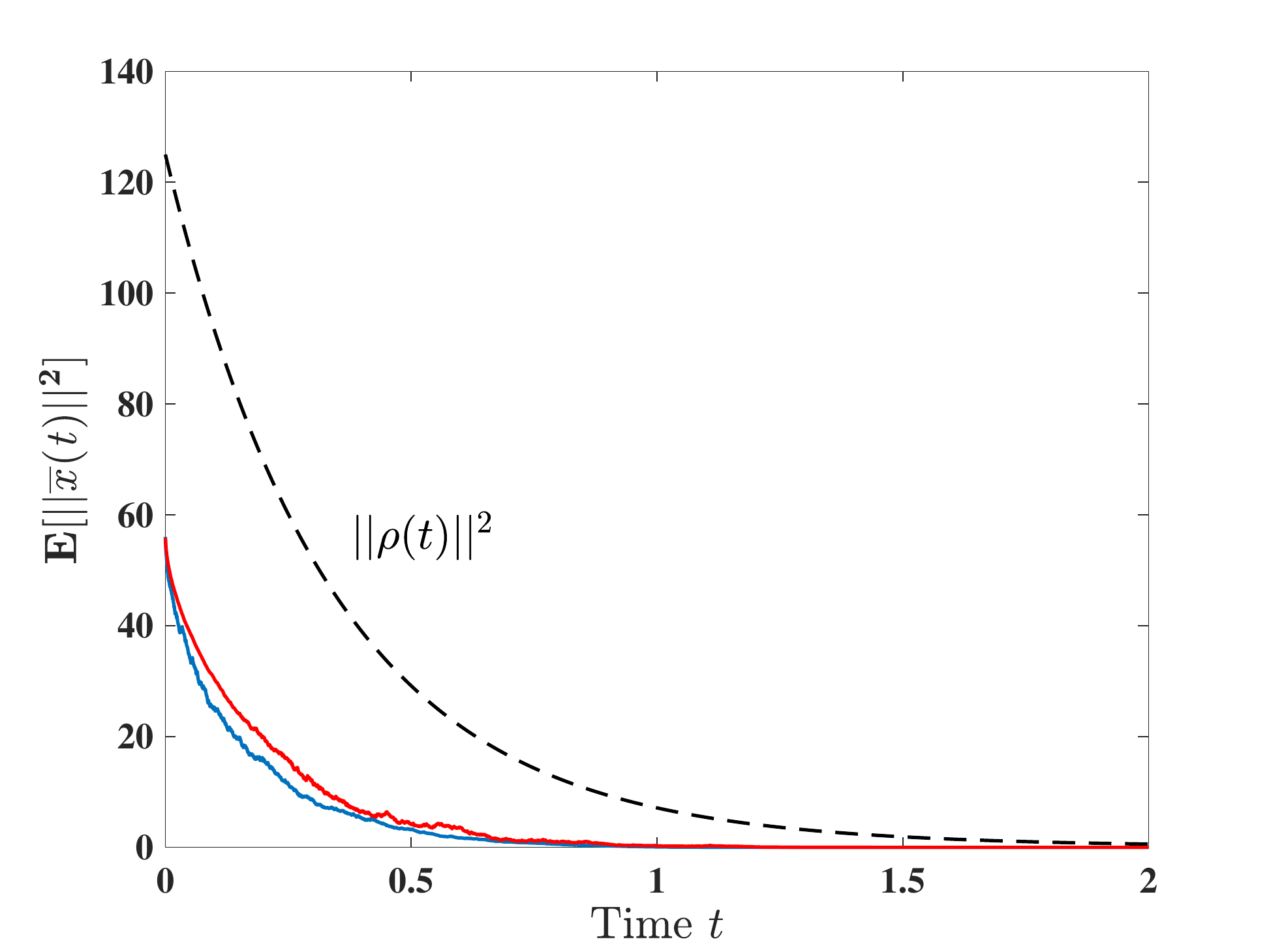}
	\caption{Mean-squared values of relative position starting from $\ol x(0)=[-4.9,1,-3,-1.5,4.5]^T$ under proposed control laws \eqref{cont1} [blue] and \eqref{control2} [red]. The dashed black line shows squared values of performance function $\rho(t)=[\rho_1(t),\ldots,\rho_m(t)]^T$, $\rho_k(t)=4.9\e^{-1.5t}+0.1$ for all $k\in[1;5]$.}
	\label{fig:star1}
\end{figure} 
In order to demonstrate the effectiveness of the proposed results, we consider a simulation example of a stochastic multi-agent system consist of six agents given by stochastic differential equations as 
\begin{align}\label{num_ex}
	\diff x_i=u_i \diff t+\e^{-0.1|x_i|}\sin(x_i) \diff W_t, \ i\in[1;6],
\end{align} 
where $x_i\in\R$ and $u_i\in\R$ are the absolute position and the control input of the $i$th agent, respectively, $W_t$ is the standard Brownian motion, and the diffusion term $g(x_i)=\e^{-0.1|x_i|}\sin(x_i)$ with the corresponding Lipschitz constant $\mathsf{k}_g=1$. The undirected communication graph with $\V=\{1,2,3,4,5,6\}$ and $\E=\{e_1,e_2,e_3,e_4,e_5\}$ (\ie, $n=6$ and $m=5$) is shown in Figure \ref{fig:star}. We consider the prescribed performance functions $\rho_k(t)=4.5\e^{-\epsilon t}+0.1$ for all $k\in[1;m]$ and we consider two cases: $\epsilon=1.5$ and $\epsilon=10$. The performance bounds $-\rho_k(t)$ and $\rho_k(t)$ are depicted with black dashed lines in Figure \ref{fig:PPC}. We obtain $\gamma=4$ and $\kappa=0.39$ satisfying conditions \eqref{A1}, and \eqref{A2} in both the cases. Figure \ref{fig:PPC} shows several realizations of relative positions of the controlled stochastic multi-agent systems starting from $\ol x(0)=[-4.9,1,-3,-1.5,4.5]^T$  without an external input (Figure \ref{fig:PPC}(a)) and with the proposed distributed controller \eqref{control2} for decay rate $\epsilon=1.5$ (Figure \ref{fig:PPC}(b)), and with the proposed distributed controller \eqref{cont1} with different values of decay rates of performance bounds (Figure \ref{fig:PPC}(c) and (d)). The mean-squared value of relative positions of the controlled stochastic multi-agent system $\EE[\|\ol x(t)\|^2]$ computed for 1000 realizations is shown in Figure \ref{fig:star1}. From Figures \ref{fig:PPC} and \ref{fig:star1}, one can readily verify that the proposed control laws achieve almost sure consensus and consensus in expectation, respectively, while respecting prescribed performance constraints in $2nd$ moment.     

\section{Conclusion}\label{conclusion}
In this work, we studied a consensus problem of stochastic multi-agent systems with prescribed performance bounds. Under the assumption of a tree graph, a distributed control law has been proposed for multi-agent systems containing agents with state-dependent stochastic noise such that the entire system can achieve consensus in expectation (or almost surely) while satisfying predefined performance constraints in the sense of $q$th moment.     
Future work includes extending the results for more general graphs with cycles and for heterogeneous agents.

%%%%%%%%%%%%%%%%%%%%%%%%%%%%%%%%%%%%%%%%%%%%%%%%%%%%%%%%%%%%%%%%%%%%%%%%%%%%%%%%.

\bibliographystyle{IEEEtran}
\bibliography{bibliography}

\addtolength{\textheight}{-12cm}   % This command serves to balance the column lengths
% on the last page of the document manually. It shortens
% the textheight of the last page by a suitable amount.
% This command does not take effect until the next page
% so it should come on the page before the last. Make
% sure that you do not shorten the textheight too much.

\end{document}